\documentclass[twoside,leqno,twocolumn,11pt]{article}
\usepackage{ltexpprt}
\usepackage{microtype}
\usepackage{graphicx}
\usepackage[algo2e,ruled]{algorithm2e}
\usepackage{color}
\usepackage{numprint}
\usepackage{subfig}
\usepackage[left]{lineno}
\usepackage{amsmath}
\usepackage{amssymb}
\usepackage{mathtools}
\usepackage{numprint}
\usepackage{widetable}
\usepackage{tabularx}
\usepackage{url}
\usepackage{pgfplots}
\usepackage{xcolor}
\usepackage{verbatim}
\usepackage{footnote}
\usepackage{wrapfig}
\usepackage{placeins}
\usepackage{float}

\definecolor{bblue}{HTML}{4F81BD}
\definecolor{rred}{HTML}{C0504D}
\definecolor{ggreen}{HTML}{9BBB59}
\definecolor{ppurple}{HTML}{9F4C7C}

\nprounddigits{2}



\newcommand{\bigO}[1]{\mathcal{O} \bigl (#1 \bigr )}
\newcommand{\norm}[1]{\left\lVert#1\right\rVert}

\pagestyle{plain}

\begin{document}

\title{Estimating Current-Flow Closeness Centrality \\ with a Multigrid Laplacian Solver}
\author{Elisabetta Bergamini\thanks{Institute of Theoretical Informatics, Karlsruhe \newline \hspace*{1.5em} Institute of Technology (KIT), Karlsruhe, Germany} \and Michael Wegner\footnotemark[1] \and Dimitar Lukarski\thanks{Paralution Labs UG \& Co. KG, Gaggenau, Germany} \and Henning Meyerhenke\footnotemark[1]}

\date{}

\maketitle

\begin{abstract}
Matrices associated with graphs, such as the Laplacian, lead to numerous interesting graph problems expressed as linear systems. One field where Laplacian linear systems play a role is network analysis, e.\,g.\ for certain centrality measures that
indicate if a node (or an edge) is important in the network. One such centrality measure is current-flow closeness.

To allow network analysis workflows to profit from a fast Laplacian solver, we provide an implementation of
the LAMG multigrid solver in the NetworKit package, facilitating the computation of current-flow closeness 
values or related quantities.
%
Our main contribution consists of two algorithms that accelerate the current-flow computation for
one node or a reasonably small node subset significantly. One sampling-based algorithm provides an
unbiased estimation of the related electrical farness, 
the other one is based on the Johnson-Lindenstrauss transform.
Our inexact algorithms lead to very accurate results in practice. Thanks to them one is now able to compute an estimation of current-flow closeness of one node on networks with tens of millions of nodes and edges within seconds or a few minutes.
From a network analytical point of view, our experiments indicate that current-flow closeness can discriminate among different nodes significantly better than traditional shortest-path closeness and is also considerably more resistant to noise -- we thus show
that two known drawbacks of shortest-path closeness are alleviated by the current-flow variant.\\[0.75ex]
\noindent \textbf{Keywords:} Laplacian linear system, current-flow closeness centrality, algebraic multigrid, commute time, network analysis
\end{abstract}

\section{Introduction}

Laplacian linear systems $Lp = b$ play a central role in various tasks in algorithmic graph theory and network analysis.
Particularly the connection between the graph Laplacian $L$ and electrical networks allows analytical and algorithmic
insights~\cite{mavroforakis2015spanning,DBLP:conf/stacs/BrandesF05}. Based on this connection, several centrality measures, which aim at identifying the most important nodes or edges in a network, have been introduced in the literature. For example, spanning edge centrality~\cite{mavroforakis2015spanning} measures the  importance of an edge for the connectedness of the graph. Current-flow betweenness~\cite{DBLP:conf/stacs/BrandesF05} indicates the participation of a node in paths between other nodes, while current-flow closeness~\cite{DBLP:conf/stacs/BrandesF05} measures the average distance from a node to the other nodes of the network -- both in terms of an electrical distance measure. Differently from shortest-path closeness, here distance is a quantity which takes \emph{all} paths between two nodes into account. In addition to centrality measures, Laplacian linear systems can be used for several other tasks in algorithmic graph theory and algorithmic network analysis, including sparsification~\cite{DBLP:journals/siamcomp/SpielmanS11}, graph partitioning~\cite{DBLP:journals/pc/MeyerhenkeMS09}, 
approximate maximum network flow~\cite{DBLP:conf/stoc/ChristianoKMST11} and graph drawing~\cite{DBLP:journals/tvcg/GansnerHN13}.

Although algorithms for solving Laplacian linear systems quickly in practice have been proposed, such as the multigrid-based solvers LAMG~\cite{livne2012lean} and CMG~\cite{koutis2011combinatorial}, an implementation of these solvers is not available in popular network-analysis frameworks, such as NetworkX~\cite{Hagberg2008}, igraph~\cite{igraph}, graph-tool~\cite{peixoto_graph-tool_2014} and NetworKit~\cite{Staudt2014}. In this paper, we implement LAMG in NetworKit, bridging the gap between performance-focused frameworks for network analysis and state-of-the-art Laplacian solvers. Also, we formulate the current-flow closeness centrality presented in~\cite{DBLP:conf/stacs/BrandesF05} in terms of linear systems between pairs of nodes and propose two approximation\footnote{Approximation in the sense of being inexact, not necessarily with guarantee on the approximation quality.} algorithms which work very well in scenarios where the centrality of a single node or a subset of nodes has to be computed, according to our experimental study. 
To the best of our knowledge, only the NetworkX~\cite{Hagberg2008} package contains an implementation of current-flow closeness. The approach implemented in NetworkX requires to invert the Laplacian of the graph, which takes $\Omega(n^2)$ time (and cubic time in practice with standard tools), since the inverse of the Laplacian is in general a dense matrix. This does not allow to scale to large networks with millions of nodes and edges, even when we want to compute the closeness of a single node or of a subset of nodes. With this approach, in fact, computing the closeness of one node is just as expensive as computing it for all nodes. On the contrary, our approach can estimate the closeness of a single node very quickly: For example, it requires less than 2 minutes on a network with 50 millions edges.

Thanks to our approach, we can study for the first time the properties of current-flow closeness in large networks with tens of millions of nodes. We compare current-flow closeness with traditional shortest-path closeness and show that the former succeeds in differentiating nodes significantly better than the latter, and is also more resilient to noise. In addition, we study the correlation between centrality measures and degrees in real-world networks, in relation to a recent theoretical result for random geometric graphs~\cite{DBLP:journals/jmlr/LuxburgRH14}. Our experiments show that there is a strong correlation between degrees and current-flow closeness in complex networks, whereas there is basically no correlation in street networks.

This update to previous versions (SIAM CSC 2016, arXiv) fixes the statement of Proposition~\ref{prop:unbiased}
as well as a few minor issues. Content-wise it still reflects the state of late 2016.

\section{Preliminaries}
\label{sec:prelim}
\paragraph{Graph basics.} Throughout the paper we consider connected undirected graphs $G = (V, E, w)$ having $n = |V|$ nodes and $m = |E|$ edges. The function $w : E \rightarrow \mathbb{R}$ assigns a weight $w(e)$ to each edge $e = \lbrace u, v \rbrace$ where $u$ and $v$ are the nodes connected by the edge. 
We introduce the shorter notation $\omega_{uv}$ for the weight $\omega(e)$ of an edge $e = \{u,v\}$.
Disconnected graphs require some adjustments of the proposed algorithmic approach, but can in principle 
be handled by treating each connected component separately.
We denote by $\vec{E}$ the set of bidirected edges of $G$, i.e. $\vec{E} := \{(u,v) : u, v \in V, \{u,v\} \in E \}$.

We say $\pi = (v_1, ..., v_k)$ is a \textit{path} in $G$ if $v_i \in V$ for $i =1,...,k$ and $\{v_i, v_{i+1}\} \in E$ for $i =1, ..., k-1$. We denote the quantity $\sum_{i=1}^{k-1} \omega_{v_i, v_{i+1}}$ as the \textit{length} of path $\pi$. The path(s) of minimum length among all paths between two nodes $u$ and $v$ is (are) called the \textit{shortest path(s)} between $u$ and $v$.

A Laplacian matrix $L$ is defined for an undirected graph $G$ as $L := D-A$, where $A=A(G)$ is the (weighted) adjacency matrix
of $G$ and $D=D(G)$ the diagonal matrix storing the (weighted) node degrees: $D_{ii} = \sum_{j = 1}^{n} \omega_{ij}$.

\paragraph{Graphs as electrical networks.}
\label{par:elec-nets}
One can regard a graph as an \textit{electrical network} where each edge $\{u,v\}$ corresponds to a resistor
with conductance $\omega_{uv}$ (the edge weight) or resistance $1 / \omega_{uv}$.  
We can interpret the conductance as the ease with which an electrical current can flow through the edge. We can associate a \textit{supply} $b : V \rightarrow \mathbb{R}$ with the electrical network, representing the nodes where current enters and leaves the network. A positive supply $b(v)$ means that current is entering the network from node $v$ and a negative supply means that current is leaving the network. In the following, we will always assume that $\sum_{v \in V} b(v) = 0$ and that $b(s) = +1$ and $b(t) = -1$ for two nodes $s$ and $t$, and that $b(w) = 0\ \forall w \neq s,t$. 
Also, we will refer to such a supply as vector $b_{st} \in \mathbb{R}^{n \times 1}$.
We could interpret this as $s$ and $t$ being the two poles of a battery: this generates a current $e_{st}: {\vec{E}} \rightarrow \mathbb{R}$ flowing through the network (if seen from the other direction of an edge, the current changes its sign). To each node $v$ we can associate a \textit{potential} $p_{st}(v)$ such that the vector $p_{st} \in \mathbb{R}^{n \times 1}$ satisfies the following linear system:
 \vspace{-1ex}
\begin{equation}
\label{eq:system}
Lp_{st} = b_{st}
\end{equation}
Then, the current flowing through edge $(u,v)$ is defined as $(p_{st}(u) - p_{st}(v))/ \omega_{st}$. Notice that, since $G$ is connected, the rank of the Laplacian is $n-1$ and there are infinitely many vectors $p_{st}$ satisfying Eq. (\ref{eq:system}), each of them differing from the other by an additive constant. However, the current is well defined, since it depends on the difference between two potentials.

\section{Related Work}
\label{sec:rel_work}

\subsection{Solving Laplacian linear systems.}
We focus our description on iterative solvers due to their better time complexity on sparse graphs compared to direct solvers.
Most iterative solvers reduce the norm of the residual $r = \norm{b - Ax}$ iteratively by altering the current preliminary solution vector $x$ in every iteration. One usually stops when the (relative) residual is below a certain
tolerance $\tau$, which yields a vector $x'$ that is a good enough approximation to the actual solution $x$. 
%
While there are recent advances in theory to solve special linear systems including Laplacians in nearly-linear time~\cite{Spielman2004, Kelner2013}, those algorithms are not competitive in practice yet~\cite{hoske2015, boman2015evaluating}. 
In fact, the Conjugate Gradient (CG) algorithm outperforms the nearly-linear time algorithms in practice even though its asymptotic running time is typically higher.

A popular class of iterative algorithms to solve linear equations quickly in practice is called Algebraic Multigrid (AMG)~\cite{ruge1987algebraic}. The basic idea is to solve the actual linear system by iteratively solving coarser (i.e. smaller) yet similar systems and projecting the solutions of those back to the original system. AMG algorithms can be distinguished by the class of matrices they can handle and the way they construct the coarser systems. Two fast algorithms that are specifically designed for solving Laplacian systems are CMG by Koutis et al.~\cite{koutis2011combinatorial} and LAMG by Livne and Brandt~\cite{livne2012lean}.

 
We decided to use LAMG as linear solver due to its particular design for complex networks. To this end, LAMG alternates between two stages called \emph{elimination} and \emph{aggregation} to construct the coarser systems. The former eliminates low degree nodes in the corresponding graph, the latter partitions nodes into \emph{aggregates} based on a special affinity measure~\cite{livne2012lean}. Both stages reduce the number of nodes and thus define a coarsening mechanism.
Based on an extensive evaluation, Livne and Brandt state that running times of LAMG and CMG are comparable but LAMG tends to be more robust in the sense that CMG has large outliers on a small set of systems~\cite{livne2012lean}.

\subsection{Laplacian linear systems for network analysis.} \label{subsec:LaplaciansInNetworkAnalysis}
The connection between the graph Laplacian and electrical networks (see Section~\ref{par:elec-nets}) has allowed for the solution of several graph algorithmic problems in terms of Laplacian systems. One of them is a centrality measure called \emph{spanning edge centrality}, which indicates whether an edge is vital for the
connectedness of a network~\cite{mavroforakis2015spanning}. The notion of importance for connectedness is also helpful for graph sparsification. A sparsification algorithm takes a dense graph and wants to find a sparser representation (= with fewer edges)
with the same vertex set and similar properties~\cite{DBLP:journals/siamcomp/SpielmanS11}, e.\ g.\ approximately the same cut sizes
or eigenvalues.
Since processes described by Laplacian linear systems can distinguish sparse from dense graph regions,
edges in dense areas are, intuitively speaking, redundant and can be ``sparsified'' without doing much 
harm to the cut sizes when the weights of retained edges are properly scaled.
This idiosyncrasy allows the use of processes described by Laplacian linear systems also for graph partitioning~\cite{DBLP:journals/pc/MeyerhenkeMS09}, 
approximate maximum network flow~\cite{DBLP:conf/stoc/ChristianoKMST11}, and graph drawing~\cite{DBLP:journals/tvcg/GansnerHN13}. 
Moreover, the connection to electrical flow makes the use in dynamic load balancing of divisible tokens by diffusion~\cite{DiekmannFrommerMonien99efficient} possible.

The interpretation of a graph as an electrical network has also led to the definition of two centrality measures based on current flow, current-flow closeness and current-flow betweenness~\cite{DBLP:conf/stacs/BrandesF05}. Compared to traditional closeness and betweenness centrality, these two measures take \emph{all} paths between two nodes into account and not only \emph{shortest} paths.

\section{Current-flow closeness centrality}
\label{sec:cf_closeness}
Closeness centrality measures the efficiency of a node in spreading information to the other nodes of the network. Formally, let $d_{\text{SP}}(u,v)$ the shortest-path distance between $u$ and $v$ (i.e.\ the length of the shortest path(s) between $u$ and $v$). Closeness of node $v$ is then the reciprocal of the average shortest-path distance between $v$ and the other nodes:
\begin{equation}
\label{eq:closeness}
c_{\text{SP}}(v) := \frac{n-1}{\sum_{w \neq v} d_{\text{SP}}(v,w)}.
\end{equation}
The smaller the average distance between $v$ and the other nodes, the higher is the closeness of $v$. To better understand the meaning of closeness, let us consider the two graphs in Figure~\ref{fig:comparison}. Since closeness takes only shortest-path distances into account, the closeness of node $x_1$ in the graph on the left and the score of node $x_2$ in the graph on the right will be exactly the same. 
However, there is only one path connecting $x_1$ to each of the other nodes. This means that if just a single edge is removed from the graph, $x_1$ will become disconnected from part of the other nodes. For example, let us assume the edges represent streets and $x_1$ is the location of an ambulance. If a congestion occurs, the ambulance in $x_1$ will not be able to reach part of the nodes (or needs to take a long detour).
On the other hand, if the ambulance was in $x_2$, a congestion would limit only partially (or not at all) the ability of the ambulance to reach the nodes of the network.
\begin{figure}[tb]
  \begin{center}
    \includegraphics[width=0.4\textwidth]{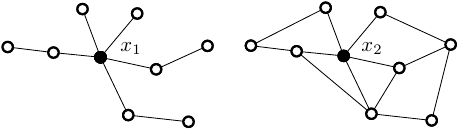}
  \end{center}
  \caption{Shortest-path closeness centrality cannot distinguish between node $x_1$ and node $x_2$.}
  \label{fig:comparison}
\end{figure}

The example above illustrates that traditional closeness is unable to model scenarios where the distance between two nodes does not only depend on the length of the shortest path between them, but also on the number of shortest or relatively short paths between the nodes. For this reason, a variant of closeness named \textit{current-flow closeness centrality} has been introduced~\cite{DBLP:conf/stacs/BrandesF05}. If we see the graph as an electrical network, the \textit{effective resistance} $d_{\text{ER}}(u,v) := p_{uv}(u) - p_{uv}(v)$ can be interpreted as an alternative distance measure between nodes $u$ and $v$. Indeed, if we multiply $d_{\text{ER}}(u,v)$ by the volume of $G$ (i.e. the sum of the weights of the edges in $G$), we get the \textit{commute time} between $u$ and $v$. The commute time between nodes $u$ and $v$ is defined as $H(u,v) + H(v,u)$, where the hitting time $H(x,y)$ is the expected time step in which a random walk in the graph starting in $x$ reaches $y$ for the first time. 
Thus, the commute time can be seen as the expected time a random walk needs for going from $u$ to $v$ and back again.
Since the commute time is based on random walks, it depends on \emph{all} the paths between two nodes. Thus, one can use the effective resistance to define a modified centrality measure~\cite{DBLP:conf/stacs/BrandesF05}:
\begin{equation}
\label{eq:closeness2}
c_{\text{ER}}(v) := \frac{n-1}{\displaystyle{\sum_{w \neq v} d_{\text{ER}}(v,w)}} = \frac{n-1}{\displaystyle{\sum_{w \neq v} p_{vw}(v) - p_{vw}(w)}},
\end{equation}
where the denominator is also called \emph{electrical farness}.
By convention, we define $d_{\text{ER}}(v,v) := 0\ \forall v \in V$.
To compute $c_{\text{ER}}(v)$ of a node $v$, we can solve $n-1$ linear systems. Alternatively, we could invert the Laplacian matrix $L$ of $G$ (after omitting the row and column corresponding to a node, in order to get a regular matrix $\tilde{L}$), using the property that $p_{vw}(v) - p_{vw}(w) = \tilde{L}^{-1}_{vv} - 2 \tilde{L}^{-1}_{vw} + \tilde{L}^{-1}_{ww}$~\cite{DBLP:conf/stacs/BrandesF05}.

\section{Approximating current-flow closeness}
\label{sec:approx}
As outlined in~\cite{mavroforakis2015spanning}, the fastest Laplacian linear solvers with a theoretical time complexity guarantee run in
$\tilde{O}(m \log{n} \log{(1/\tau)})$ time. Multigrid methods such as CMG and LAMG are much faster
in practice and have an \emph{empirical} running time of $O(m \log{(1/\tau)})$.

Computing current-flow closeness for only one node to the desired tolerance would already require
the solution of $n-1$ linear systems, yielding $O(n^2 \log{(1/\tau)})$ time in practice assuming a sparse graph.
This is infeasible for large networks with millions of nodes and edges. For this reason, we propose two approximation techniques for computing current-flow closeness for a subset of the nodes in large graphs, and we compare them in our experimental evaluation. The first one is based on a simple sampling approach, which recalls the one used for classical closeness~\cite{DBLP:journals/jgaa/EppsteinW04}. The second one uses the Johnson-Lindenstrauss transform (JLT), which allows to project the system into a lower-dimensional space by using $\bigO{\log n}$ random vectors. 


\subsection{Sampling-based approximation.}
\label{sec:sampling}
The idea is to sample uniformly at random a set $S \subseteq V$ of nodes $S = \{s_1, ..., s_k\}$, which we call \textit{pivots}. To approximate the current-flow closeness of a node $v$, we compute the effective resistance $d_{\text{ER}}(s, v)$ between all nodes $s \in S$ and $v$. Then, the closeness of $v$ can be approximated as 
\[ 
\tilde{c}_{\text{ER}}(v) := \frac{k}{n} \cdot \frac{n-1}{\tilde{f}_{\text{ER}}(v, S)},
\] 
where $\tilde{f}_{\text{ER}}(v, S) := \sum_{i = 1}^{k} d_{\text{ER}}(v, s_i)$ is the electrical farness of $v$
with respect to $S$.
\begin{proposition}
\label{prop:unbiased}
$\tilde{f}_{\text{ER}}(v, S)$ is un unbiased estimator for the electrical farness ${f}_{\text{ER}}(v)$ of $v$
(i.e.\ $E[\tilde{f}_{\text{ER}}(v, S)] = {f}_{\text{ER}}(v)$).
\end{proposition}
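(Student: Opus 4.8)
The plan is to reduce everything to a single application of linearity of expectation applied to the denominator of $\tilde{c}_{\text{ER}}(v)$, since that denominator is the only quantity into which the randomness enters linearly. First I would pin down the sampling model: the pivots $s_1, \ldots, s_k$ are drawn uniformly from $V$, so each has the uniform marginal distribution on $V$ (this holds whether or not sampling is with replacement, which is all the expectation of a sum requires). By the convention $d_{\text{ER}}(v,v) = 0$, a pivot that happens to equal $v$ contributes nothing, so summing over $w \in V$ and over $w \neq v$ agree. With this model fixed, each pivot satisfies
\begin{equation*}
E[d_{\text{ER}}(v, s_i)] = \frac{1}{n} \sum_{w \in V} d_{\text{ER}}(v, w) = \frac{1}{n} \sum_{w \neq v} d_{\text{ER}}(v, w),
\end{equation*}
and hence, by linearity of expectation, the expected denominator is $E\bigl[\sum_{i=1}^k d_{\text{ER}}(v, s_i)\bigr] = \frac{k}{n}\sum_{w \neq v} d_{\text{ER}}(v, w)$.

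Next I would observe that the normalization $\frac{k}{n}\cdot\frac{n-1}{\,\cdot\,}$ is chosen precisely so that substituting this expected denominator collapses $\tilde{c}_{\text{ER}}(v)$ onto $c_{\text{ER}}(v)$: replacing the denominator by $\frac{k}{n}\sum_{w\neq v} d_{\text{ER}}(v,w)$ gives
\begin{equation*}
\frac{k}{n} \cdot \frac{n-1}{\frac{k}{n}\sum_{w \neq v} d_{\text{ER}}(v, w)} = \frac{n-1}{\sum_{w \neq v} d_{\text{ER}}(v, w)} = c_{\text{ER}}(v).
\end{equation*}
This is the algebraic heart of the claim: the leading factor $k/n$ cancels the factor $k/n$ produced by summing over $k$ pivots rather than all $n$ nodes, leaving exactly the full-graph average that defines $c_{\text{ER}}(v)$.

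The main obstacle — and the step I would treat most carefully — is that this substitution pulls a random quantity out from inside a reciprocal, whereas $E[\tilde{c}_{\text{ER}}(v)]$ is the expectation of the reciprocal of the random denominator. The computation above delivers, fully rigorously, that the \emph{reciprocal} $1/\tilde{c}_{\text{ER}}(v) = \frac{n}{k(n-1)}\sum_{i=1}^k d_{\text{ER}}(v, s_i)$ is an unbiased estimator of $1/c_{\text{ER}}(v)$, and this reciprocal relation is the precise sense in which $\tilde{c}_{\text{ER}}(v)$ is an unbiased estimator of $c_{\text{ER}}(v)$. To secure the identity as a statement about $\tilde{c}_{\text{ER}}(v)$ itself I would either invoke this reciprocal interpretation of unbiasedness directly, or check the sanity case of sampling without replacement with $k = n$, where $S = V$ deterministically and the identity holds exactly. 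Everything outside this reciprocal passage is a routine linearity-of-expectation calculation, so the genuine content of the proof lies entirely in fixing the sampling model and in how one interprets unbiasedness across the reciprocal.
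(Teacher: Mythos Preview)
Your approach is essentially the same as the paper's: both prove that the (rescaled) random denominator $\frac{n}{k}\sum_{i=1}^{k} d_{\text{ER}}(v,s_i)$ is an unbiased estimator of $\sum_{w\in V} d_{\text{ER}}(v,w)$, and then pass to the reciprocal. The paper carries out the expectation by an explicit combinatorial count over all $k$-subsets of $V$ (sampling without replacement), using that each node $w$ appears in $\binom{n-1}{k-1}$ of the $\binom{n}{k}$ subsets; your marginal-uniform argument is a cleaner way to the same identity and, as you note, is insensitive to whether sampling is with or without replacement. Where you are actually more careful than the paper is in flagging the reciprocal step: the paper simply asserts that the proposition ``follows directly from the properties of expected value'' once the denominator is shown unbiased, which glosses over exactly the Jensen-type obstruction you isolate. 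In that sense your write-up proves precisely what the paper's proof proves --- unbiasedness of $1/\tilde{c}_{\text{ER}}(v)$ for $1/c_{\text{ER}}(v)$ --- while being explicit that the literal identity $E[\tilde{c}_{\text{ER}}(v)] = c_{\text{ER}}(v)$ is not what either argument establishes.
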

\begin{proof}
We show that $Y \coloneqq \frac{n}{k} \sum_{s_i\in S} d_{\text{ER}}(v, s_i)$ is an unbiased estimator for $s(v) = \sum_{w\in V} d_{\text{ER}}(v,w)$, then the claim follows directly from the properties of expected value. In the following, we denote the set of $k$-combinations of $V$ with $V_{k}$.
\[
\begin{split}
E(Y) & = \sum_{S = \{s_1, ..., s_k\} \in V_{k}} \frac{1}{\binom{n}{k}} \frac{n}{k} \sum_{s_i \in S} d_{\text{ER}}(v,s_i) = \\
	& = \frac{1}{\binom{n}{k}} \frac{n}{k} \sum_{w \in V} \binom{n-1}{k-1} d_{\text{ER}}(v,w) \\
	& = \sum_{w\in V} d_{\text{ER}}(v,w).\qquad \qquad \qquad \qquad \qquad \hfill \square
\end{split}
\]
\end{proof}
With $k$ pivots, the empirical complexity of our approach is $\bigO{k m \log (1/\tau)}$ with a multigrid solver.
Our experiments in Section~\ref{sec:approx_comp} show that a very small $k$ (e.g., $k=10$) is already enough to 
get a very good approximation.

\subsection{Projection-based approximation.} 
\label{sec:jlt}
Spielman and Srivastava~\cite{DBLP:journals/siamcomp/SpielmanS11} show how to compute an approximation of effective resistance based on the JLT. Let $B$ be the $m \times n$ incidence matrix where each row corresponds to an edge of $G$ and each node corresponds to a node such that, for edge $e = \{u,v\}$, $B(e,u) = +1$, $B(e,v) = -1$ and $B(e, w) = 0\  \forall w \neq u,v$ (since $G$ is undirected, the direction of edge $e$ can be chosen arbitrarily). Then, they show that the effective resistance between node $u$ and node $v$ can be re-written as $d_{\text{ER}}(u,v) = ||W^{1/2} B L^{\dagger} (e_u - e_v)||_2^2$, where $W$ is the diagonal $m \times m$ matrix such that $W(e,e) = \omega(e)$, $L^{\dagger}$ is the Moore-Penrose pseudoinverse~\cite{DBLP:books/daglib/0086372} of $L$ and $e_u$ is the $n \times 1$ vector such that $e(u) = 1$ and equal to 0 everywhere else. The effective resistances can therefore be seen as pairwise distances between vectors in $\{W^{1/2} B L^{\dagger} e_u \}_{u\in V}$, which allows to apply the JLT: If we project the vectors into a lower-dimensional space spanned by $k = \bigO {\log n}$ random vectors, the pairwise distances are approximately preserved. In other words, we can consider the pairwise distances between vectors in $\{QW^{1/2} B L^{\dagger} e_u \}_{u\in V}$, where $Q$ is a random projection matrix of size $k \times m$ with elements in $\{0, +\frac{1}{\sqrt{k}}, -\frac{1}{\sqrt{k}}\}$. 

Since we do not want to compute $QW^{1/2} B L^{\dagger}$ directly (it would require to (pseudo)invert $L$), we approximate it by solving $k$ linear systems: for $i = 1,..., k$, the $i$-th row $z_i^T$ of $QW^{1/2} B L^{\dagger}$ can be computed by solving the system $L z_i = \{QW^{1/2} B\}_{\cdot, i}$, see Algorithm~\ref{algo:JLT} (which we reuse from~\cite{DBLP:journals/siamcomp/SpielmanS11}). Note that the mul\-ti\-pli\-cation in Line~\ref{line:multiplication} requires only $\bigO{m \log n}$ operations, since $B$ is sparse (with $2m$ non-zero entries) and $W$ is diagonal.
\begin{algorithm2e}
 \begin{small}
\LinesNumbered
\SetKwData{B}{$\tilde{c}_B$}\SetKwData{VD}{VD}
\SetKwFunction{getVertexDiameter}{getVertexDiameter}
\SetKwFunction{sampleUniformNodePair}{sampleUniformNodePair}
\SetKwFunction{computeExtendedSSSP}{computeExtendedSSSP}
\SetKwInOut{Input}{Input}\SetKwInOut{Output}{Output}
\Input{$G=(V,E)$}
\Output{Approx. $\tilde{d}_{\text{ER}}(u,v)\ \forall (u, v)\in V \times V$}
Construct random matrix $Q$\;
Compute $Y = QW^{1/2} B$\; \label{line:multiplication}
$Z \leftarrow$ empty $k \times n$ matrix\;
\For{$i = 1,..., k$}
{
	solve the system $L z_i = Y_{\cdot, i}$\;
	$Z_{i, \cdot} \leftarrow z_i^T$\;
}
\ForEach{$(u,v) \in V \times V$}
{
$\tilde{d}_{\text{ER}}(u,v) \leftarrow ||Z_{\cdot, u} - Z_{\cdot, v} ||^2_2$\;
}
\Return{$\tilde{d}_{\text{ER}}$}
\end{small}
\caption{Effective resistance approximation~\cite{DBLP:journals/siamcomp/SpielmanS11}}
\label{algo:JLT}
\end{algorithm2e}
When choosing $k$ in Algorithm~\ref{algo:JLT} equal to $\bigO{\log n/ \epsilon^2}$ for any $\epsilon > 0$, it was shown~\cite{DBLP:journals/siamcomp/SpielmanS11} that, with probability $\geq 1-1/n$,
$$(1-\epsilon){d}_{\text{ER}}(v,w) \leq \tilde{d}_{\text{ER}}(v,w) \leq (1+\epsilon){d}_{\text{ER}}(v,w)$$
for all $(v,w)\in V\times V$.

An approximation of current-flow closeness for node $v$ can therefore be computed as $\tilde{c}_{\text{ER}}(v) :=  (n-1)/\sum_{w \neq v} \tilde{d}_{\text{ER}}(v, w)$.
If $(1-\epsilon){d}_{\text{ER}}(v,w) \leq \tilde{d}_{\text{ER}}(v,w) \leq (1+\epsilon){d}_{\text{ER}}(v,w)$ for each $w \neq v$, then also $(1-\epsilon){c}_{\text{ER}}(v) \leq \tilde{c}_{\text{ER}}(v) \leq (1+\epsilon){c}_{\text{ER}}(v)$. This is provably true only with probability $(1-1/n)^{n-1}$. However, our experimental results show that the approximation works well in practice: on all tested instances, $\tilde{c}_{\text{ER}}$ is \textit{always} within a $(1+\epsilon)$-factor from ${c}_{\text{ER}}$ 
(see Section~\ref{sec:approx_comp}).

\section{Experimental Evaluation}
\label{sec:experiments}

In this section we evaluate the performance of the two approximation algorithms described in Section~\ref{sec:approx}. First, we want to give some more details on the implementation, the benchmarking setup and the graph instances we used, before elaborating on the results of our evaluation.

\subsection{Implementation.}
We implemented both approximation algorithms in NetworKit~\cite{Staudt2014}, an open-source tool for fast exploratory analysis of massive networks. As became apparent in Section~\ref{subsec:LaplaciansInNetworkAnalysis}, linear systems play a quite important role in network analysis and since both approximation approaches introduced in this paper rely on solving Laplacian systems, we provide the Laplacian solver LAMG by Livne and Brandt~\cite{livne2012lean} in NetworKit with our own new implementation. 
The original implementation by Livne and Brandt is written in Matlab with some performance-critical parts in C and therefore
difficult for us to integrate into large-scale network analysis workflows. 

In informal experiments, our C++ implementation of LAMG outperforms their Matlab/C implementation regarding the solve times by a factor of 1.5 on average. In comparison the CMG solver by Koutis et al.~\cite{koutis2011combinatorial} is on average 11\% faster than our LAMG implementation on our large test instances. When solving linear systems, in all our experiments we set the relative residual error $\tau$ to $10^{-5}$.

\subsection{Benchmarking Setup.}
All experiments were done on a machine equipped with 256 GB RAM and a 2.7 GHz Intel Xeon CPU E5-2680 having 2 sockets with 8 cores each
and hyperthreading enabled.
The machine runs 64 bit SUSE Linux and we compiled our code with g++-4.8.1 and OpenMP~3.1.

\subsection{Instances.} \label{sub:instances}
Tables~\ref{table:small_instances}~and~\ref{table:instances} in the appendix show the set of instances we use for our experiments. While Table~\ref{table:small_instances} includes rather small complex networks with up to about 150\,000 edges, Table~\ref{table:instances} includes larger networks with up to 56 million edges. If a network has more than one connected component, we used the largest connected component (LCC). We ignore self-loops and the direction of edges in case a graph is directed. All the graphs are unweighted.
%
\subsection{Approximation algorithms.}
\label{sec:approx_comp}
In this section we compare the two approximation algorithms described in Section~\ref{sec:sampling} and Section~\ref{sec:jlt}, respectively. We refer to the first one as \textsc{Sampling} and to the second one as \textsc{Projection}. \textsc{Sampling} depends on the number $|S|$ of samples, whereas \textsc{Projection} depends on the dimension $k$ of the $k \times n$ random projection matrix. For simplicity, we call \textit{exact} the approach computing $c_{\text{ER}}$ as in Eq.~(\ref{eq:closeness2}), solving $n-1$ linear systems to the desired tolerance $\tau$.

For our experiments, we select 100 nodes for each of the networks shown in Table~\ref{table:small_instances} in the appendix. For each of these nodes, we compute current-flow ``exactly'' (to the desired tolerance $\tau$) and the two approximations with different parameters. In particular, we set the number $|S|$ of samples of \textsc{Sampling} to 10, 20, 50, 100, 200, 500, and 1000. When running \textsc{Projection}, we fix $k$ to $\lceil \log n / \epsilon ^2\rceil$ and set $\epsilon$ equal to 0.5, 0.2, 0.1, and 0.05. To measure the accuracy of the algorithms, we use the well-known Spearmann rank correlation coefficient, which measures how close the ranking of nodes determined by the approximation algorithm is close to that of the exact algorithm. We recall that the closer the Spearmann coefficient is to 1, the more correlated are the two rankings, with 0 meaning no correlation and 1 meaning the two ranks are identical. 

Figure~\ref{fig:approximations} reports the accuracy (Spearmann coefficient) and the running times in seconds for each approximation algorithm and for each parameter. We do not report explicitly to which parameter each point in the plot corresponds to, but this can be easily deduced from the running times: a smaller sample size corresponds to a smaller running time for \textsc{Sampling} and a larger $\epsilon$ corresponds to a smaller running time for \textsc{Projection}. Figure~\ref{fig:approximations} reports, for each approximation algorithm and for each parameter, the average over all networks of Table~\ref{table:small_instances} of time and Spearmann coefficient. 

The results are quite self-explanatory: the \textsc{Sampling} approach clearly outperforms \textsc{Projection} and its accuracy is extremely high already with only 10 samples. We also compute for each algorithm and parameter the number of rank inversions, i.e. the number of node pairs $\{u,v\}$ for which the approximated closeness of $u$ is smaller than the approximated closeness of $v$, but the exact closeness of $u$ is larger than or equal to the exact one of $v$ (or vice versa). With ten pivots, the average number of rank inversions of \textsc{Sampling} is 12.5; it is always below 10 for higher number of samples. This means that, out of $\binom{100}{2} = 4950$ pairs, less than 10 are inverted, corresponding to $0.2\%$.
\begin{figure}[htb]
  \begin{center}
    \includegraphics[width=0.5\textwidth]{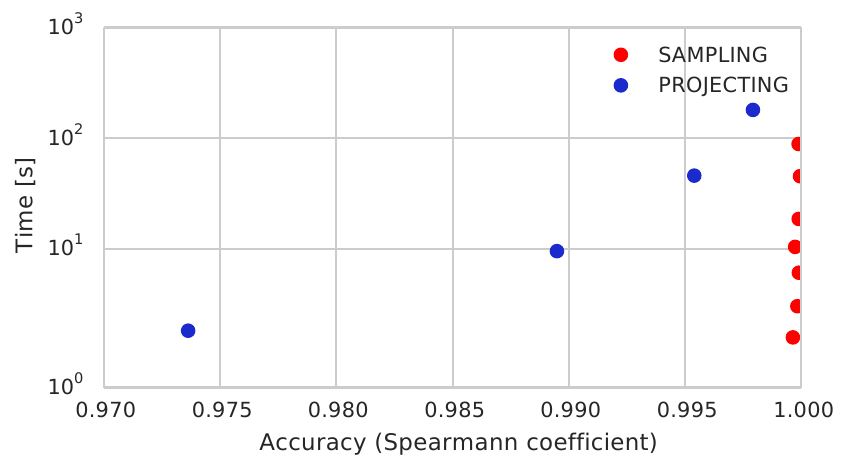}
  \end{center}
  \vspace{-3ex}
  \caption{Time vs. Spearmann coefficient for the two approximation algorithms, using different parameters. The points represent the average among the networks of Table~\ref{table:small_instances} in the Appendix.}
  \label{fig:approximations}
\end{figure}

In addition to accuracy in terms of ranks, we also evaluate the maximum relative error. We define the relative error for a node $v$ as $e(v) = \max\{r(v), 1/r(v)\}$, where $r(v)$ is the ratio between the exact current-flow closeness of $v$ and its approximation. The maximum relative error is then defined as $\max_{v \in V}{e(v)}$. Figure~\ref{fig:rel_errors} reports the results. It is interesting to notice that, with respect to this measure, the two algorithms behave quite similarly. 
Also, notice that the maximum relative error for \textsc{Projection} is always smaller than $\epsilon$ (we recall the values of $\epsilon$ used are 0.05, 0.1, 0.2 and 0.5), although we can only prove that this is true with probability at least $(1-1/n)^{n-1}$.

To summarize, our results show that both algorithms lead to very good accuracy in terms of maximum relative error, while the sampling approach better preserves the ranking of nodes, even when the number of samples is very small. For this reason, in our experiments on large graphs, we make use of the sampling approach.
\begin{figure}[htb]
  \begin{center}
    \includegraphics[width=0.5\textwidth]{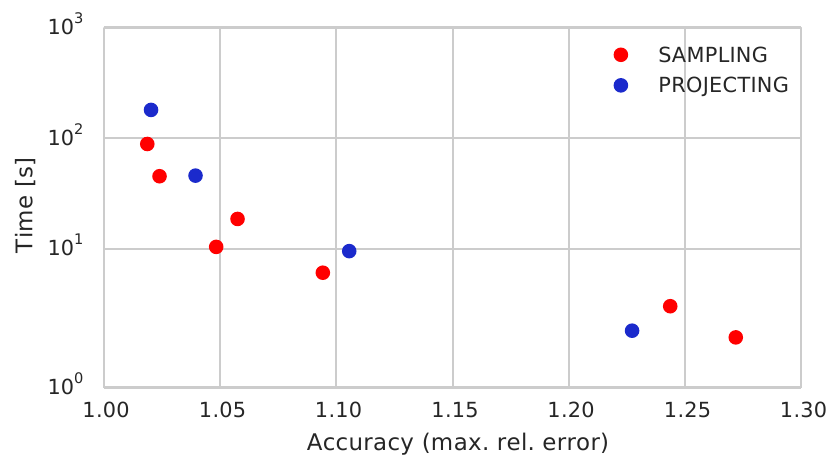}
  \end{center}
  \vspace{-3ex}
  \caption{Time vs. maximum relative error for the two approximation algorithms, using different parameters. The points represent the average among the networks of Table~\ref{table:small_instances} in the Appendix.}
  \label{fig:rel_errors}
\end{figure}
On average (over the instances of Table~\ref{table:small_instances}), computing $c_\text{ER}$ on 100 nodes takes more than 20 minutes, whereas using \textsc{Sampling} with 20 pivots takes only 2.87 seconds. Table~\ref{table:add_results} in the Appedix shows the detailed running times.

\subsection{Comparison with shortest-path closeness.}
As explained in Section~\ref{sec:cf_closeness}, our intuition is that current-flow closeness should represent the efficiency of a node reaching the other nodes of the network better than shortest-path closeness. To verify this assumption, we first compare the two measures in terms of their capability to discriminate between different nodes. In this experiments, we use the networks of Table~\ref{table:small_instances} and compute (exactly) current-flow and shortest-path closeness on 100 randomly chosen nodes.
Figure~\ref{fig:rel_std_dev} shows the relative standard deviation for shortest-path and current-flow closeness. The relative standard deviation is defined as the standard deviation divided by the average. It is always significantly higher for current-flow closeness than it is for shortest-path closeness, meaning that there is much more variation in the scores computed by the former.
\begin{figure}[htb]
  \begin{center}
    \includegraphics[width=0.5\textwidth]{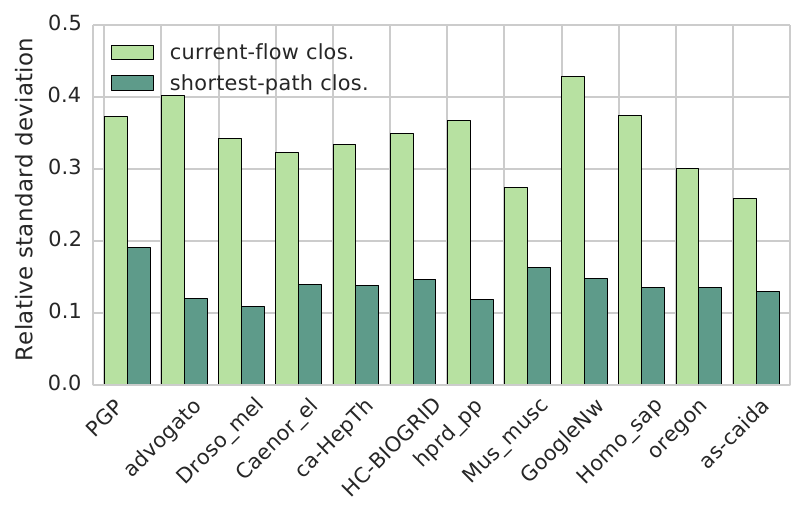}
  \end{center}
   \vspace{-5ex}
  \caption{Relative standard deviation for shortest-path and current-flow closeness.}
  \label{fig:rel_std_dev}
\end{figure}

Also, similarly to what has been done in~\cite{mavroforakis2015spanning} for spanning edge centrality and edge betweenness centrality, we measure the resilience to noise, in this case for current-flow closeness and shortest-path closeness. The idea is to add edges to the graph and see how well the initial rankings are preserved. Our intuition is that, if we add some edge that creates a shortcut between a node $v$ and some other nodes, the shortest-path closeness of $v$ will be more affected than its current-flow closeness, since the former takes only shortest paths into account. This is confirmed by our experiments, summarized in Figure~\ref{fig:resiliance}. For each network in Table~\ref{table:small_instances}, we insert a percentage of the total number of edges varying from $1\%$ to $10\%$. To have a high number of shortcuts involving the sampled nodes, we always add edges between one of the sampled nodes and other nodes of the graph. Figure~\ref{fig:resiliance} shows, for each percentage of inserted edges, the average among all tested networks of the Spearmann correlation coefficient between the initial ranking and the ranking after the insertions. Figure~\ref{fig:resiliance} shows that current-flow closeness is more resilient to edge insertions and the difference between the resilience of the two measures increases the more the graph changes. 
\begin{figure}[htb]
  \begin{center}
    \includegraphics[width=0.5\textwidth]{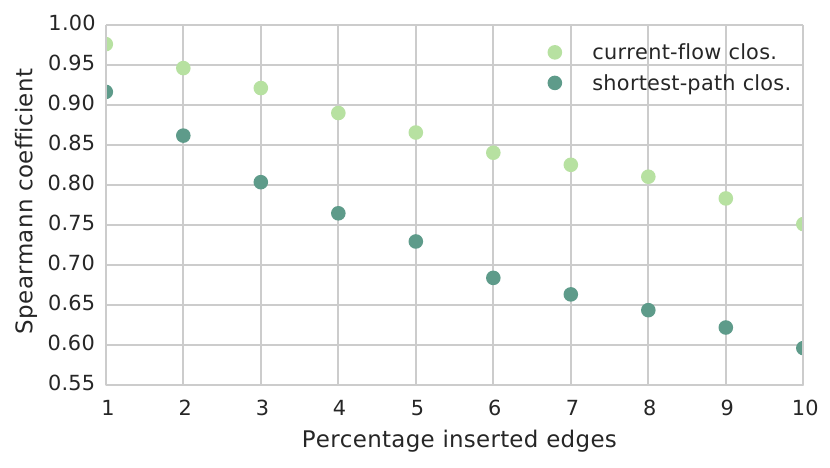}
  \end{center}
  \vspace{-3ex}
  \caption{Resilience to noise for different percentages of inserted edges. The points represent the average among the networks of Table~\ref{table:small_instances} in the Appendix.}
  \label{fig:resiliance}
\end{figure}
%
\subsection{Correlation with degree.}
\label{ref:corr_degrees}
In certain random geometric graph models, such as $\epsilon$-graphs, kNN graphs, and Gaussian similarity graphs, it was recently shown~\cite{DBLP:journals/jmlr/LuxburgRH14} that the effective resistance between two nodes $u$ and $v$ converges to $1/\deg{(u)}+1/\deg{(v)}$ when the number of nodes goes to infinity. This result also has implications on current-flow closeness: when the number of nodes goes to infinity in such graphs, $c_{\text{ER}}(v)$ goes to $c_{\text{A}}(v): = (n-1)/\sum_{w\neq v}(1/\deg{(v)} + 1/\deg{(w)} )$. 
However, the structure of real-world networks (e.g. complex or street networks) is significantly different from random graphs and it is not clear how close current-flow closeness is to this asymptotic value in reality. In this section we therefore study the correlation between current-flow closeness, as well as shortest-path closeness and betweenness, with $c_{\text{A}}$. In our experiments, we consider large networks with up to 56 millions nodes and edges and we use the sampling approach to approximate current-flow closeness (with 20 pivots). Table~\ref{table:times} shows the running times of our approximation when computing the closeness of a single node.
We approximate betweenness using the approach presented in~\cite{DBLP:conf/alenex/GeisbergerSS08}, which is already implemented in NetworKit. Since the results are very different for street and complex networks, we study them separately.

Figure~\ref{fig:corr_degrees} shows the Spearmann coefficient computed between the three centrality measures and $c_{\text{A}}$ on the complex networks of Table~\ref{table:instances} in the Appendix. The results show that there is in fact a strong positive correlation. This is weaker for shortest-path closeness, with an average Spearmann coefficient of 0.63 and stronger for betweenness and current-flow closeness, with an average of 0.81 and 0.89, respectively. We obtain similar results on the smaller instances of Table~\ref{table:small_instances}, where the averages are 0.63, 0.85 and 0.89 for shortest-path closeness, betweenness and current-flow closeness, respectively.
\begin{figure}[htb]
  \begin{center}
    \includegraphics[width=0.5\textwidth]{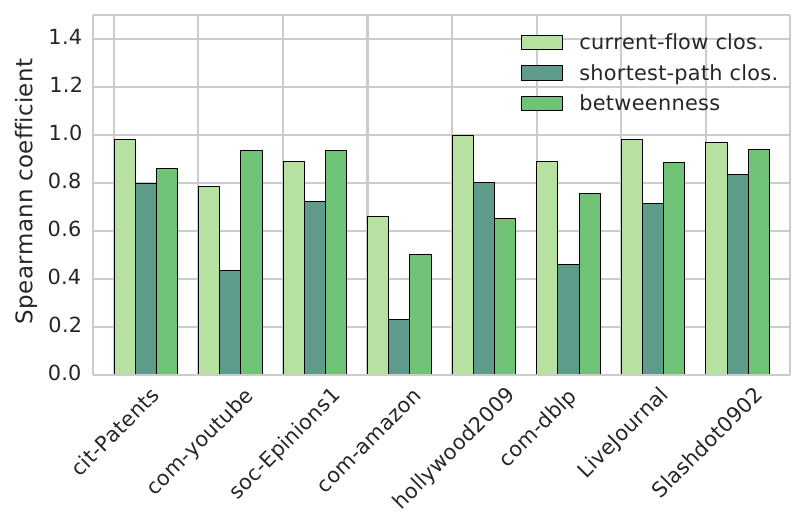}
  \end{center}
  \vspace{-5ex}
  \caption{Correlation with $c_{\text{A}}$ for complex networks.}
  \label{fig:corr_degrees}
\end{figure}
The results are very different for street networks (Figure~\ref{fig:corr_degrees_streets}). Here the correlation with the degrees is generally very low and sometimes even negative, with an average of -0.02 for closeness, 0.35 for betweenness and 0.07 for current-flow closeness.

While $c_{\text{A}}$ and $c_{\text{ER}}$ are unrelated on street networks, our results show that there is actually a strong correlation between them in complex networks. 
This behavior is likely due to the different type of degree distributions in the two network classes. While complex networks
usually feature a skewed degree distribution with many small, but also some high-degree nodes, the degrees in street networks
are closely concentrated.
\begin{table}
\caption{Running time of \textsc{Sampling} with 20 pivots when computing $c_{\text{ER}}$ of a single node.}
\label{table:times}
\begin{tabularx}{8cm}{l r}	
\hline
Graph & Time approximation [s] \\
\hline
cit-Patents & 125.99 \\
com-youtube & 5.06 \\
soc-Epinions1 & 0.28 \\
com-amazon & 2.56 \\
hollywood2009 & 107.52 \\
com-dblp & 1.90 \\
LiveJournal & 287.27 \\
Slashdot0902 & 0.41 \\
roadNet-TX & 6.28 \\
luxembourg & 0.13 \\
belgium & 2.39 \\
netherlands & 5.33 \\
italy & 10.91 \\
great-britain & 12.72 \\
europe & 103.34 \\
\end{tabularx}
 \vspace{-4ex}
\end{table}
Consequently, in some applications where a very good accuracy is not needed, $c_{\text{A}}$ might be used as an approximation of $c_{\text{ER}}$ in complex networks. The same thing can be said for betweenness, which is only slightly less correlated with $c_{\text{A}}$ than $c_{\text{ER}}$. This is very convenient, since $c_{\text{A}}$ can be computed in $\bigO{m}$ time. However, our results in Section~\ref{sec:approx_comp} show that our sampling-based approach can compute an extremely accurate approximation in time $\bigO{k m \log (1/\epsilon)}$ even when the number $k$ of samples is very small. For this reason, we believe the sampling approach is probably the best option for most applications.
\begin{figure}[tb]
  \begin{center}
    \includegraphics[width=0.5\textwidth]{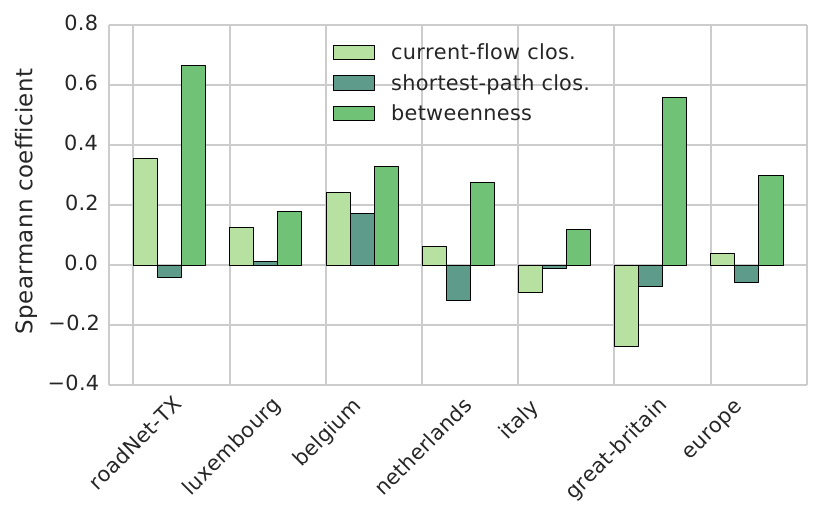}
  \end{center}
  \vspace{-5ex}
  \caption{Correlation with $c_{\text{A}}$ for street networks.}
  \label{fig:corr_degrees_streets}
\end{figure}

\section{Conclusions}
%
Although many important graph properties can be formulated in terms of Laplacian linear systems, popular network analysis frameworks lack an implementation of state-of-the-art solvers. We bridge this gap by providing a Lean Algebraic Multigrid (LAMG) implementation in NetworKit, our toolkit for large-scale network analysis.

Based on our new LAMG implementation and our algorithms \textsc{Sampling} and \textsc{Projection}, we have computed current-flow closeness centrality and provided the first published results on its behavior on large real-world networks. Our algorithms lead to very accurate results and, thanks to them, we are now able to compute an estimation of current-flow closeness of a reasonably small subset of nodes on networks with tens of millions of nodes and edges within a few seconds or minutes.
In our experiments current-flow closeness alleviates two known problems of shortest-path closeness and can thus be
seen as a viable alternative in many scenarios.
We have also shown empirically that there is a strong correlation between degrees and both current-flow closeness and betweenness centrality in complex networks, whereas the degree and current-flow closeness are basically unrelated in street networks.

Our approach based on \textsc{Sampling} or \textsc{Projection} is very fast in scenarios where we only need to compute the current-flow closeness for a subset of nodes. 
Yet, especially \textsc{Sampling} is probably too expensive if closeness has to be computed for all nodes. 
For this case, it is an interesting aspect of future work to improve on JLT and pseudoinversion.

\subsubsection*{Acknowledgments}
This work was partially supported by German Research Foundation (DFG) grant ME3619/3-1
within Priority Programme 1736 \textit{Algorithms for Big Data}.

\begin{small} 
\bibliographystyle{abbrv}
\bibliography{references.bib}
\end{small}
\clearpage
\appendix
\section{Instances used for the experiments}
\begin{minipage}{\textwidth}
\vspace{2ex}
	\captionof{table}{Properties of smaller benchmark instances used in this paper.} \label{table:small_instances}
	\begin{tabularx}{\textwidth}{l r r X r}	
	\hline
	Graph & \#Nodes in LCC & \#Edges in LCC & Description & Ref. \\
	\hline
	PGP & 10680 & 24316 & PGP trust network & ~\cite{BaderMSW12dimacs} \\
	advogato& 5272 & 42816 & Advocato trust network &~\cite{Lasagne} \\
	Drosophila\_melanogaster & 10424 & 40660 & Interactome & ~\cite{Lasagne} \\
	Caenorhabditis\_elegans & 4428 & 9659 & Metabolic network & ~\cite{Lasagne} \\
	CA-HepTh & 8638 & 24806 & Collaboration Network & \cite{SNAP} \\
	HC-BIOGRID & 4039 & 10321 & Genetic interaction &~\cite{Lasagne} \\
	hprd\_pp & 9219 & 36900 & Human proteine interaction & ~\cite{Lasagne} \\
	Mus\_musculus & 3745 & 5170 & Interactome & ~\cite{Lasagne} \\
	GoogleNw & 15763 & 148585 & Hyperlinks between web pages &~\cite{Lasagne} \\
	Homo\_sapiens & 13478 & 61006 & Metabolic network & ~\cite{Lasagne} \\
	oregon2\_010526 & 11461 & 32730 & AS peering network & \cite{SNAP} \\
	as-caida20071105 & 26475 & 53381 & CAIDA AS relationships & \cite{SNAP}   \\
	\end{tabularx}	
\vspace{2ex}
	\captionof{table}{Properties of larger benchmark instances used in this paper.} \label{table:instances}
	\begin{tabularx}{\textwidth}{l r r X r}	
	\hline
	Graph & \#Nodes in LCC & \#Edges in LCC & Description & Ref. \\
	\hline
	cit-Patents & 3764117 & 16511740 & Citation Network & \cite{SNAP} \\
	com-Amazon & 334863 & 925872 & Amazon Product Network & \cite{SNAP} \\
	com-DBLP & 317080 & 1049866 & Collaboration Network & \cite{SNAP} \\
	com-Youtube & 1134890 & 2987624 & Youtube Social Network & \cite{SNAP} \\
	hollywood-2009 & 1069126 & 56306653 & Collaboration Network & \cite{Davis2011} \\
	com-LiveJournal & 3997962 & 34681189 & LiveJournal Social Network & \cite{SNAP} \\
	Slashdot0902 & 82168 & 504230 & Slashdot Zoo Social Network & \cite{SNAP} \\
	soc-Epinions1 & 75877 & 405739 & Epinions Social Network & \cite{SNAP} \\
	roadNet-TX & 1351137 & 1879201 & Road Network of Texas & \cite{SNAP} \\
	luxembourg.osm & 114599 &119666 & Road Network of Luxembourg & ~\cite{BaderMSW12dimacs} \\
	belgium.osm & 1441295 & 1549970 & Road Network of Belgium & ~\cite{BaderMSW12dimacs} \\
	netherlands.osm & 2216688 & 2441238 & Road Network of the Netherlands & ~\cite{BaderMSW12dimacs} \\
	italy.osm & 6686493 & 7013978 & Road Network of Italy & ~\cite{BaderMSW12dimacs} \\
	great\_britain.osm & 7733822 & 8156517 & Road Network of Great Britain & ~\cite{BaderMSW12dimacs} \\	
	europe.osm & 50912018 & 54054660 & Road Network of Europe & ~\cite{BaderMSW12dimacs} \\	
	\end{tabularx}	
\end{minipage}
\clearpage
\section{Additional experimental results}
\begin{minipage}{\textwidth}
\vspace{2ex}
\captionof{table}{Comparison between exact (= within desired tolerance $\tau$) and \textsc{Sampling} approach, with 20 pivots. The first two columns report the running times of the two approaches, when computing current-flow closeness on 100 nodes. The third column reports the Spearmann rank correlation coefficient between the approaches and the fourth the percentage of rank inversions.}\label{table:add_results}
	\begin{tabularx}{\textwidth}{l r r r r}	
	\hline
	Graph & Time exact [s] & Time \textsc{Sampling} 20 [s] & Spearmann coeff. & Rank Inver. \\
	\hline
	PGP & 558.97 & 1.68 & 0.99990 & 0.14\% \\
	advogato & 383.42 & 2.39 & 0.99986 & 0.16\% \\
	Drosophila\_melanogaster & 1077.78 & 3.50 & 0.99986 & 0.12\% \\
	Caenorhabditis\_elegans & 68.50 & 0.64 & 0.99975 & 0.28\% \\
	CA-HepTh & 800.65 & 2.87 & 0.99989 & 0.14\% \\
	HC-BIOGRID & 186.47 & 1.92 & 0.99975 & 0.28\% \\
	hprd\_pp & 988.58 & 4.01 & 0.99990 & 0.10\% \\
	Mus\_musculus & 33.66 & 0.33 & 0.99958 & 0.44\% \\
	GoogleNw & 4612.19 & 8.16 & 0.99987 & 0.14\% \\
	Homo\_sapiens & 1913.06 & 4.90 & 0.99999 & 0.02\% \\
	oregon2\_010526 & 640.97 & 1.49 & 0.99988 & 0.14\% \\
	as-caida20071105 & 3354.62 & 2.62 & 0.99990 & 0.12\% \\
	\end{tabularx}	
\end{minipage}

\end{document}